\documentclass[12pt]{amsart}
\usepackage{graphicx}
\newtheorem{theorem}{Theorem}[section]
\newtheorem{corollary}[theorem]{Corollary}

\newtheorem{remark}[theorem]{Remark}
\newtheorem{proposition}[theorem]{Proposition}

\newtheorem{example}[theorem]{Example}
\title[Optimal execution with geometric price process]{Optimal execution of equity with geometric price process}
\author[G. Hernandez-del-Valle]{Gerardo Hernandez-del-Valle}
\address{Statistics Department, Columbia University\\1255 Amsterdam Ave. Room 1005, New York, N.Y., 10027.}
\email{gerardo@stat.columbia.edu}
\thanks{The research of this author was partially supported by Algorithmic Trading Management LLC}
\author[C. G. Pacheco-Gonzalez]{Carlos G. Pacheco-Gonzalez}
\address{Mathematics Department, CINVESTAV-IPN\\Av. Instituto PolitŽcnico Nacional \#2508, Col. San Pedro Zacatenco, M\'exico, D.F., C.P. 07360}
\email{cpacheco@math.cinvestav.mx}
\date{October 14, 2009}
\subjclass[2000]{Primary: 91B28, 49K15}
\begin{document}
\maketitle
\begin{abstract}
In this paper we derive the  Markowitz-optimal, deter\-ministic-execution trajectory  for a trader who wishes to buy or sell a large position  of a share  which evolves as a geometric Brownian motion in contrast to the arithmetic model which prevails in the existing literature. Our calculations include a general temporary impact, rather than a specific function. Additionally,  we point out---under our setting---what are the necessary ingredients to tackle the problem with adaptive execution trajectories.   We provide a couple  of examples which illustrate the results. We would like to stress the fact that in this paper we use understandable user-friendly techniques.
\end{abstract}
\section{Introduction}
The problem of optimal execution is a very general problem in which a trader who wishes to buy or sell a {\it large\/} position $K$ of a given asset $S$---for instance wheat, shares, derivatives, etc.---is confronted with the dilemma of executing
slowly or as quick as possible. In the first case he/she would be exposed to volatility, and in the second to the laws of offer and demand.
 Thus the trader most hedge between the {\it market impact\/} (due to his trade) and the {\it volatility\/} (due to the market).

The key ingredients to study this optimization problem are: (1) The modeling of the asset---which is typically modeled as a geometric Brownian process. (2) The modeling of the so-called market impact which heuristically suggests the existence of an instantaneous impact---so-called {\it temporary\/}---and a cumulative component referred to as {\it permanent\/}. And finally (3)  one should establish a criteria of optimality.

The main {\it aim\/} of this paper is to study and characterize  the  so-called Marko\-witz-optimal open-loop execution trajectory, in terms of nonlinear second order, ordinary differential equations (Theorems \ref{thm1} and \ref{thm2}). The above is done for a  trader who wishes to buy or sell a large position $K$ of shares $S$ which evolve as a geometric Brownian motion (although in the existing literature it is considered an arithmetic Brownian motion for this problem). In this paper we only deal with deterministic strategies, also called {\it open loop\/} controls; however,  we point out---in Section \ref{cinco}---the key ingredients to address the problem with adaptive strategies, also termed {\it Markovian\/} controls (work in progress).
 
 The main {\it motivation\/} of this work is on one hand economic, but on the other the effect of {\it market impact\/} in the valuation of contingent claims, and its connection with optimal execution of {\it derivatives\/}. Intuitively, for this kind of problem, one would expect to consider adaptive strategies to tackle the questions, although  it seems natural to first understand the deterministic case.
 An important element in our analysis, will be the use of a linear stochastic differential equation, first used---to our knowledge---by Brennan and Schwartz (1980) in their study of interest rates.

 The problem of minimizing expected overall liquidity costs has been analyzed using different market models by  Bertsimas and Lo (1998), Obizhaeva and Wang, and Alfonsi et al. (2007a,2007b), just to mention a few. However, these approaches miss the volatility risk associated with time delay. Instead, Almgren and Chriss (1999,2000), suggested  studying and solving a mean-variance optimization for sales revenues in the class of deterministic strategies. Further, on Almgren and Lorenz (2007) allowed for intertemporal updating and proved that this can {\it strictly\/} improve the mean-variance performance. Nevertheless, in Schied and Sch\"oneborn (2007),  the authors study the original problem of expected utility maximization with CARA utility functions. Their main result states that for CARA investors there is surprisingly no added utility from allowing for intertemporal updating of strategies. Finally, we mention that the Hamilton-Jacobi-Bellman approach has also recently been studied in Forsyth (2009). 
 
Our paper is organized as follows: In Section 2 we introduce our model, assumptions and auxiliary results. Namely through a couple of subsequent Propositions we characterize and compute---by use of a Bre\-nnan-Schwartz type process (\ref{brennan})---the moments of  certain random variable that is relevant in our  optimization problem. Section 3 is devoted to deriving and proving the characterization of our optimal trading strategies and the optimal value function as well. After that, in Section 4, we present a couple of examples in order to exemplify the procedure derived in  Section 3. We first compare Almgren and Chriss trajectory with ours, and in Example 2 we use a temporary market impact $h$ to the power $3/5$ as suggested in the empirical study of Almgren {\it et. al.\/}'s (2005).  We conclude the paper, in Section 5,  by pointing out the key ingredients in the study of adaptive execution strategies. 
\section{Auxiliary Results}
In this section, we describe the dynamics of the asset $S$,  the so-called market impact, and we introduce the Brennan-Schwartz process. This process will allow us  not only to compute the moments of the optimization argument, but also to represent it in terms of an SDE.  For the remainder of  this section, let $c(t)$ be a fixed and differentiable function for $0\leq t\leq T<\infty$.
\subsection{The model.} Let the price of the share $S$ of a given company evolve  as a geometric process, where the random component $B$ is standard Brownian motion, 
 i.e. 
\begin{eqnarray*}
dS_t&=&S_t\left[\left(g(c(t))+\frac{dh(c(t))}{dt}\right)dt+\sigma dB_t\right],\qquad S_0=s,
\end{eqnarray*}
and where $g$ and $h$ represent  respectively the permanent---which accumulates over time---and instantaneous temporary impact. Thus, the {\it future\/} effective price per share due to our trade can be modelled as
\begin{eqnarray}\label{price}
S_t=s\exp\left\{\int_0^ug(c(v))dv+h(c(u))-\frac{1}{2}\sigma^2u+\sigma B_u\right\},
\end{eqnarray}
where $\sigma>0$ is an estimable parameter.\\[0.3cm]
\begin{remark} (a) Note that that process $\ln(S)$, where $S$ is as in (\ref{price}), coincides precisely with the ``standard'' notion of both permanent and temporary impact. That is, if we model the price process as arithmetic Brownian motion:
\begin{eqnarray*}
dS_t=\sigma dB_t+g(c(t))dt+\frac{dh}{dt}(c(t)),\qquad S_0=s
\end{eqnarray*}
then
\begin{eqnarray*}
S_t-s=\int_0^tg(c(u))du+h(c(t))+\sigma B_t.
\end{eqnarray*}
Hence, the first term in the right-hand side of the equality is {\it accumulating\/} over time, on the other hand, the second term is {\it not}.\\ (b) Next, observe that the process $c$ can be thought of as a control, which in turn may be:
\begin{enumerate}
\item an admissible process $c$ which is adapted to the natural filtration $\mathcal{F}^S$ of the associated  process (\ref{price}) is called a {\it feedback control\/},
\item an admissible process $c$ which can be written in the form $c_t=u(t,S_t)$ for some measurable map $u$ is called {\it Markovian control\/}, notice that any Markovian control is a feedback control,
\item a deterministic processes of the family of admissible controls are called {\it open loop controls\/}. 
\end{enumerate}
(c) In this paper we will only deal with open loop controls, yet to study feedback controls it will become quite useful to introduce the so-called Brennan-Schwartz process, introduced in the next subsection. The reason being, that to derive the Hamilton-Jacobi-Bellman equation---see Section 5---it is more convenient to have a diffusion instead of an average of a diffusion. 
\end{remark}
\subsection{Averaged geometric and Brennan-Schwartz processes} 
In order to study our problem we will introduce the following  averaged geometric Brownian process:
\begin{eqnarray}\label{chi}
\xi_t&:=&\int_0^tc(u)S_udu,\qquad t\geq 0.
\end{eqnarray}
By (\ref{price}) we can express $\xi_t$ as
\begin{eqnarray*}
\nonumber \xi_t&=&\int_0^t c(u)se^{\int_0^ug(c(v))dv+h(c(u))-\frac{1}{2}\sigma^2u+\sigma B_u}du.
\end{eqnarray*}
Thus, if $c(u)$ represents the number of shares bought or sold at time $u$ at price $S_u$, then $\xi_t$ represents the total amount spent or earned by the trader up to time $t$.

To compute the moments of $\xi$ we will use the following  linear non-homogeneous stochastic differential equation
\begin{eqnarray}\label{brennan}
dX_t&=&\left[c(t)-\left(g(c(t))+\frac{dh(c(t))}{dt}-\sigma^2\right)X_t\right]dt-\sigma X_tdB_t\\
\nonumber X_0&=&0,
\end{eqnarray}
which has been used for instance by Brennan and Schwartz (1980) in the modeling of interest rates, by Kawaguchi and Morimoto (2007) in environmental economics, and which may also be used to study the density of averaged geometric Brownian motion [see for instance, Linetsky (2004)].

In general, its usefulness is due to the fact that one may construct a Brennan-Schwartz process $X$ which satisfies 
$$X\stackrel{\mathcal{D}}{=}\xi,$$
where $\stackrel{\mathcal{D}}{=}$ stands for equality in distribution.
In this paper, it is used, together with It\^o's lemma to show that $\xi=S\cdot X$, which alternatively will allow us to compute the second moment of $\xi$ in terms of an iterated integral, indeed:

\begin{proposition}\label{prop1}
Let processes $S$, $\xi$, and  $X$ be as in (\ref{price}), (\ref{chi}), and  (\ref{brennan}), respectively, then
\begin{eqnarray*}
\xi_t=S_t\cdot X_t,\qquad t\geq 0
\end{eqnarray*}
and
\begin{eqnarray*}
d\xi^2_t=2\xi_tS_t c(t)dt.
\end{eqnarray*}

\end{proposition}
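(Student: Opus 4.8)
The plan is to verify the identity $\xi_t = S_t X_t$ directly by showing that both sides solve the same stochastic differential equation with the same initial condition, and then to derive the dynamics of $\xi^2$ as an immediate corollary. First I would compute the SDE satisfied by the product $S_t X_t$ using It\^o's product rule, $d(S_tX_t) = S_t\,dX_t + X_t\,dS_t + d\langle S, X\rangle_t$. The factor $S$ solves $dS_t = S_t[(g(c(t)) + h'(c(t)))\,dt + \sigma\,dB_t]$ (reading off the drift and diffusion from (\ref{price}), or equivalently from the SDE stated at the top of the model subsection), and $X$ solves (\ref{brennan}). The cross-variation term is $d\langle S, X\rangle_t = (\sigma S_t)(-\sigma X_t)\,dt = -\sigma^2 S_t X_t\,dt$, since the Brownian integrand of $S$ is $\sigma S_t$ and that of $X$ is $-\sigma X_t$.

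Next I would substitute and collect terms. The $dB_t$ terms are $S_t(-\sigma X_t) + X_t(\sigma S_t) = 0$, so the product is a process of bounded variation. The $dt$ terms are
\begin{eqnarray*}
S_t\!\left[c(t) - \Bigl(g(c(t)) + h'(c(t)) - \sigma^2\Bigr)X_t\right] + X_t S_t\Bigl(g(c(t)) + h'(c(t))\Bigr) - \sigma^2 S_t X_t,
\end{eqnarray*}
and the terms proportional to $X_t S_t$ cancel in pairs ($+\sigma^2$ from the $X$-drift, $-\sigma^2$ from the cross-variation, and $\pm(g+h')$), leaving exactly $S_t c(t)\,dt$. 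Hence $d(S_t X_t) = c(t) S_t\,dt$ with $S_0 X_0 = s \cdot 0 = 0$. On the other hand, differentiating (\ref{chi}) gives $d\xi_t = c(t) S_t\,dt$ with $\xi_0 = 0$. Since both $\xi$ and $SX$ solve the same (pathwise, ODE-in-$t$) integral equation driven by the continuous adapted process $t\mapsto c(t)S_t$, they are indistinguishable, which proves $\xi_t = S_t X_t$.

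For the second assertion, apply It\^o's formula to $f(\xi) = \xi^2$; since $\xi$ has finite variation (no $dB_t$ term, as just shown), there is no second-order correction, and $d\xi_t^2 = 2\xi_t\,d\xi_t = 2\xi_t c(t) S_t\,dt$, which is the claimed formula. I do not anticipate a genuine obstacle here; the only point requiring care is the precise bookkeeping of the $\sigma^2 X_t S_t$ terms, making sure the $+\sigma^2$ in the drift of (\ref{brennan}) is exactly what is needed to absorb the It\^o cross-variation $-\sigma^2 S_t X_t$ so that the product drift collapses to $c(t)S_t$. In fact this cancellation is the whole reason the coefficient of $X_t$ in (\ref{brennan}) is written with that $+\sigma^2$ correction, so identifying it as the crux of the computation is the main conceptual step.
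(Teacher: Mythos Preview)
Your argument is correct. For the identity $\xi_t=S_tX_t$ you do exactly what the paper does: apply It\^o's product rule to $S_tX_t$, watch the $\pm(g+h')X_tS_t$ and $\pm\sigma^2X_tS_t$ terms cancel, and match the resulting equation $d(S_tX_t)=c(t)S_t\,dt$ with $d\xi_t$ and the common initial value $0$.

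For $d\xi_t^2$ you take a shorter route than the paper. The paper writes $\xi^2=(SX)^2$ and applies It\^o's formula to the two-dimensional semimartingale $(S,X)$, producing all the cross terms $2X_tS_t^2\,dX_t$, $2X_t^2S_t\,dS_t$, $S_t^2(dX_t)^2$, $X_t^2(dS_t)^2$, $4X_tS_t\,dX_t\,dS_t$, and then cancels them down to $2X_tS_t^2c(t)\,dt$. You instead observe that $\xi$ is of bounded variation (since $d\xi_t=c(t)S_t\,dt$ has no Brownian part), so the quadratic variation of $\xi$ vanishes and $d\xi_t^2=2\xi_t\,d\xi_t$ follows immediately. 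This is cleaner and avoids the bookkeeping; the paper's longer computation has the minor advantage of exhibiting the cancellation directly in the $(S,X)$ variables, which is perhaps useful if one later wants to work with $S$ and $X$ separately (as hinted at in the remark following the proposition), but for the statement at hand your argument is both correct and more efficient.
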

\begin{proof} By It\^o's lemma
\begin{eqnarray*}
d(S_t\cdot X_t)&=&S_tdX_t+X_tdS_t+dX_tdS_t\\
&=&S_tc(t)dt-S_tX_t\left(g(c(t))+\frac{dh(c(t))}{dt}\right)dt+S_tX_t\sigma^2dt\\
&&-\sigma S_tX_tdB_t+X_tS_t\left(g(c(t))+\frac{dh(c(t))}{dt}\right)dt\\
&&+\sigma X_tS_tdB_t-\sigma^2S_tX_tdt\\
&=&c(t)S_tdt\\
&=&\xi_t.
\end{eqnarray*}
Furthermore, for the second moment of $\xi$ it follows that
\begin{eqnarray*}
d\xi^2_t&=&2X_tS^2_tdX_t+2X^2_tS_tdS_t+S^2_t(dX_t)^2+X^2_t(dS_t)^2\\
&&+4X_tS_t(dX_t\cdot dS_t)\\
&=&2X_tS^2_tsc(t)dt-2X_tS^2_t\left(g(c(t))+\frac{dh(c(t))}{dt}\right)X_tdt\\
&&+2X_tS^2_t\sigma^2X_tdt-2X_tS^2_t\sigma X_tdB_t\\
&&+2X^2_tS_t\left(g(c(t))+\frac{dh(c(t))}{dt}\right)S_tdt+2X^2_tS_t\sigma S_tdB_t\\
&&+S^2_t\sigma^2X^2_tdt+X^2_t\sigma^2S^2_tdt-4X_tS_t\sigma^2X_tS_tdt\\
&=&2X_tS^2_tc(t)dt\\
&=&2\xi_t S_t c(t)dt,
\end{eqnarray*}
as claimed.
\end{proof}
\begin{remark}
The previous proposition may be derived directly from the integration by parts formula. Yet, this characterization will be useful in the study,  for instance, of the optimal trading schedule of derivatives or in the determination of Markovian controls.
\end{remark}
\subsection{Moments of $\xi$}
Now, by Proposition \ref{prop1},  it is straightforward to compute the first two moments of $\xi_t$ which will be used to solve our optimal execution problem.
\begin{corollary} Let $\xi$ be as in (\ref{chi}). Then 
\begin{eqnarray}
\label{uno}\mathbb{E}[\xi_t]&=&\int_0^t c(u)s\exp\left\{\int_0^ug(c(v))dv+h(c(u))\right\}du \\
\label{dos}\mathbb{E}[\xi^2_t]&=&2\int_0^tc(u)se^{\int_0^ug(c(n))dn+h(c(u))}\\
\nonumber&&\quad \times\left(\int_0^uc(v)se^{\int_0^vg(c(w))dw+h(c(v))+\sigma^2v}dv\right)du.
\end{eqnarray}
\end{corollary}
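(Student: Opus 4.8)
The plan is to deduce both moment formulas from Proposition~\ref{prop1} together with the explicit representation (\ref{price}) of $S$. For the first moment, I would start from the defining identity $\xi_t=\int_0^t c(u)S_u\,du$ in (\ref{chi}), take expectations, and interchange the expectation with the time integral (justified by Tonelli, since the integrand is nonnegative). This reduces the computation to evaluating $\mathbb{E}[S_u]$ for each fixed $u$. Plugging in (\ref{price}), the only stochastic part of $S_u$ is the exponential martingale factor $\exp\{-\tfrac12\sigma^2 u+\sigma B_u\}$, whose expectation is $1$; the remaining factors $s\exp\{\int_0^u g(c(v))\,dv+h(c(u))\}$ are deterministic and come straight out. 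This yields (\ref{uno}).

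For the second moment, I would use the second identity of Proposition~\ref{prop1}, namely $d\xi_t^2=2\xi_t S_t c(t)\,dt$, which on integrating from $0$ to $t$ (with $\xi_0=0$) gives $\xi_t^2=2\int_0^t c(u)\xi_u S_u\,du$. Taking expectations and again interchanging expectation with the integral, the task becomes computing $\mathbb{E}[c(u)S_u\xi_u]=c(u)\,\mathbb{E}[S_u\xi_u]$ for each $u$. Now I would expand $\xi_u=\int_0^u c(v)S_v\,dv$ and write $\mathbb{E}[S_u\xi_u]=\int_0^u c(v)\,\mathbb{E}[S_uS_v]\,dv$. The key quantity is therefore the covariance-type term $\mathbb{E}[S_uS_v]$ for $v\le u$: from (\ref{price}) this equals $s^2\exp\{\int_0^u g(c(w))\,dw+\int_0^v g(c(w))\,dw+h(c(u))+h(c(v))\}$ times $\mathbb{E}\big[\exp\{-\tfrac12\sigma^2(u+v)+\sigma(B_u+B_v)\}\big]$. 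Using $B_u+B_v=2B_v+(B_u-B_v)$ with the independent-increments property, the Gaussian moment generating function gives this last expectation as $e^{\sigma^2 v}$, which supplies exactly the extra $e^{\sigma^2 v}$ factor appearing inside the inner integral of (\ref{dos}). Assembling the pieces and relabeling the dummy variables to match the statement finishes the proof.

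The only real obstacle is the careful bookkeeping in $\mathbb{E}[S_uS_v]$: one must split $B_u+B_v$ correctly to exploit independence of increments and keep track of which deterministic exponent factors ($\int_0^u g$ versus $\int_0^v g$, and the two $h$ terms) survive. Everything else---the Fubini/Tonelli interchanges and the single-time expectation $\mathbb{E}[S_u]$---is routine. I would present the first-moment computation in full and then indicate the second-moment computation as the same argument applied to the identity $\xi_t^2=2\int_0^t c(u)\xi_u S_u\,du$ from Proposition~\ref{prop1}, highlighting only the evaluation of $\mathbb{E}[S_uS_v]$.
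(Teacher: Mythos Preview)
Your proposal is correct and follows essentially the same route as the paper: compute $\mathbb{E}[\xi_t]$ directly from the definition via $\mathbb{E}[S_u]$, and obtain $\mathbb{E}[\xi_t^2]$ from the identity $d\xi_t^2=2\xi_tS_tc(t)\,dt$ of Proposition~\ref{prop1}, reducing to $\mathbb{E}[S_uS_v]$ and evaluating the latter via the decomposition $B_u+B_v=(B_u-B_v)+2B_v$ and independence of increments. The only difference is that you make the Fubini/Tonelli justification explicit, which the paper leaves implicit.
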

\begin{proof} By (\ref{chi}):
\begin{eqnarray*}
\mathbb{E}[\xi_t]&=&\int_0^tc(u)s\mathbb{E}[S_u]du\\
&=&\int_0^t c(u)s\exp\left\{\int_0^ug(c(v))dv+h(c(u))\right\}du.
\end{eqnarray*}
From Proposition \ref{prop1}:
\begin{eqnarray*}
\mathbb{E}[\xi^2_t]&=&2\mathbb{E}\left[\int_0^tc(u)sS_u\xi_udu\right]\\
&=&2\mathbb{E}\left[\int_0^tc(u)s^2S_u\left(\int_0^uc(v)sS_vdv\right)du\right]\\
\nonumber&=&2\int_0^tc(u)s^2\left(\int_0^uc(v)s\mathbb{E}[S_uS_v]dv\right)du.
\end{eqnarray*}
Therefore, since
\begin{eqnarray*}
\mathbb{E}\left[e^{\sigma B_u+\sigma B_v}\right]&=&\mathbb{E}\left[e^{\sigma(B_u-B_v)+2\sigma B_v}\right]\\
&=&\mathbb{E}\left[e^{\sigma(B_u-B_v)}\right]\mathbb{E}\left[e^{2\sigma B_v}\right]\\
&=&e^{\frac{1}{2}\sigma^2(u-v)}e^{2\sigma^2 v},
\end{eqnarray*}
it follows that
\begin{eqnarray*}
\mathbb{E}[\xi^2_t]&=&2\int_0^tc(u)se^{\int_0^ug(c(n))dn+h(c(u))}\\
&&\quad \times\left(\int_0^uc(v)se^{\int_0^vg(c(w))dw+h(c(v))+\sigma^2v}dv\right)du.
\end{eqnarray*}
\end{proof}

\section{Markowitz Optimal open-loop Trading trajectory}
In this section we derive a Markowitz-optimal open-loop trading strategy, Theorem \ref{thm1} and \ref{thm2}, employing the auxiliary results derived in the previous section. 

\subsection{Execution shortfall} If the size of the trade $K$ is ``relatively'' small  we would  expect the market impact to be negligible, that is, the trader should execute $K$ immediately. Thus, it seems natural to compare the actual total gains (losses) $\xi_T$ with the impact-free quantity $Ks$ by introducing the so-called  
execution shortfall $Y$ defined as
\begin{eqnarray}\label{short}
Y:=\xi_T-Ks.
\end{eqnarray}

If we use Markowitz optimization criterion, then our problem is equivalent to finding the trading trajectory $\{c(t)|0\leq t\leq T\}$ which minimizes
simultaneously  the expected shortfall given a fixed risk-aversion level $\lambda$ characterized by the volatility of $Y$:
\begin{eqnarray}\label{optimize}
\nonumber J[c(\cdot)]&:=&\mathbb{E}[Y]+\lambda \mathbb{V}[Y]\\
&=&\lambda\mathbb{E}[\xi^2_T]+\mathbb{E}[\xi_T]-\lambda (\mathbb{E}[\xi_T])^2-Ks.
\end{eqnarray}
In fact, if  $\lambda>0$ then (\ref{optimize}) has a unique solution, which may be represented in the following integral form: 
\begin{proposition}\label{prop3}
Suppose that the permanent impact $g$ is linear, i.e.
  $$g(x)=\alpha x,$$
for some $\alpha>0$ as suggested by Almgren {\it et. al.\/} (2005) empirical study. Let
\begin{eqnarray}\label{f}
f(x):=\int_0^x c(u)du,\qquad f'(x):=c(x)
\end{eqnarray}  
and
\begin{eqnarray*}
\gamma_1(u,f,f')&:=&\int_0^usf'(v)e^{\alpha f(v)+ h(f'(v))+\sigma^2 v}dv,\\
\gamma(u,f,f')&:=&\int_0^usf'(v)e^{\alpha f(v)+ h(f'(v))}dv.
\end{eqnarray*}
Then $J[c(\cdot)]$ in (\ref{optimize}) can be expressed as:
\begin{eqnarray*}
\int_0^T\left\{\left[2\lambda \left(\gamma_1(u)-\gamma(u)\right)+1\right]f'(u)se^{\alpha f(u)+ h(f'(u))}-\frac{\lambda Ks}{T}\right\}du.
\end{eqnarray*}
\end{proposition}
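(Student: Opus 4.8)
The plan is to start from the closed form \eqref{optimize} of $J[c(\cdot)]$, insert the moment formulas (\ref{uno})--(\ref{dos}) of the Corollary, and then rewrite each of the four resulting terms as an integral over $[0,T]$ whose integrand carries the common weight $sf'(u)e^{\alpha f(u)+h(f'(u))}$. First I would use the linearity hypothesis $g(x)=\alpha x$ together with the notation (\ref{f}): since $\int_0^u g(c(v))\,dv=\alpha\int_0^u f'(v)\,dv=\alpha f(u)$ and $c(u)=f'(u)$, formula (\ref{uno}) becomes $\mathbb{E}[\xi_T]=\int_0^T sf'(u)e^{\alpha f(u)+h(f'(u))}\,du=\gamma(T,f,f')$, while the inner integral in (\ref{dos}) is exactly $\gamma_1(u,f,f')$, so that $\mathbb{E}[\xi_T^2]=2\int_0^T sf'(u)e^{\alpha f(u)+h(f'(u))}\,\gamma_1(u,f,f')\,du$. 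At this point one should double-check that the extra factor $e^{\sigma^2 v}$ in (\ref{dos})---the remnant of $\mathbb{E}[e^{\sigma B_u+\sigma B_v}]$---is precisely what distinguishes $\gamma_1$ from $\gamma$, and that the powers of $s$ agree.

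The key step is to turn the two remaining quantities, $\lambda(\mathbb{E}[\xi_T])^2$ and the constant $Ks$, into integrals over $[0,T]$ with the same weight. Note that $u\mapsto\gamma(u,f,f')$ is differentiable with $\partial_u\gamma(u,f,f')=sf'(u)e^{\alpha f(u)+h(f'(u))}$ and $\gamma(0,f,f')=0$; hence, applying the fundamental theorem of calculus to $u\mapsto\gamma(u,f,f')^2$,
\[
(\mathbb{E}[\xi_T])^2=\gamma(T,f,f')^2=\int_0^T\frac{d}{du}\bigl[\gamma(u,f,f')^2\bigr]\,du=2\int_0^T\gamma(u,f,f')\,sf'(u)e^{\alpha f(u)+h(f'(u))}\,du .
\]
The constant is distributed uniformly as $Ks=\int_0^T\frac{Ks}{T}\,du$.

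Finally I would add the four pieces: $\lambda\mathbb{E}[\xi_T^2]$ contributes $2\lambda\gamma_1(u)$ times the weight, $\mathbb{E}[\xi_T]$ contributes $1$ times the weight, $-\lambda(\mathbb{E}[\xi_T])^2$ contributes $-2\lambda\gamma(u)$ times the weight, and $-Ks$ contributes the pointwise constant; factoring the common weight out of the first three leaves exactly the integrand $\bigl[2\lambda(\gamma_1(u)-\gamma(u))+1\bigr]sf'(u)e^{\alpha f(u)+h(f'(u))}$ minus the constant term, which is the integral representation in the statement. The whole argument is essentially bookkeeping once $g(x)=\alpha x$ has been used to introduce $f$ and $f'$; the single non-mechanical ingredient---and thus the main obstacle---is spotting the fundamental-theorem-of-calculus device that rewrites $(\mathbb{E}[\xi_T])^2$ as one integral over $[0,T]$ with the correct weight, which is what permits it to be merged with the genuinely integral terms coming from (\ref{uno}) and (\ref{dos}).
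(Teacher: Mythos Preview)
Your proposal is correct and follows essentially the same route as the paper: substitute the moment formulas into \eqref{optimize}, use $g(x)=\alpha x$ to pass to $f,f'$, and then rewrite $(\mathbb{E}[\xi_T])^2$ as a single $[0,T]$-integral with the weight $sf'(u)e^{\alpha f(u)+h(f'(u))}$. The only cosmetic difference is that the paper obtains this last identity via integration by parts on $\int_0^T\gamma'(u)\gamma(u)\,du$, whereas you phrase it as the fundamental theorem of calculus applied to $u\mapsto\gamma(u)^2$; these are the same computation. Note, incidentally, that your derivation (like the paper's own proof) produces the constant term $-Ks/T$, so the $\lambda$ appearing in $-\lambda Ks/T$ in the displayed statement is a typo.
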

\begin{proof}
Setting
\begin{eqnarray*}
f(x):=\int_0^x c(u)du,
\end{eqnarray*}
we have $f'(x):=c(x)$. Hence, using
the integration by parts formula,
\begin{eqnarray*}
&&\int_0^tsf'(x)e^{\alpha f(x)+h(f'(x))}\left(\int_0^xsf'(y)e^{\alpha f(y)+h(f'(y))}dy\right)dx\\
&&=\left(\int_0^tsf'(x)e^{\alpha f(x)+h(f'(x))}dx\right)\left(\int_0^tsf'(x)e^{\alpha f(x)+h(f'(x))}dx\right)\\
&&\enskip -\int_0^tsf'(x)e^{\alpha f(x)+ h(f'(x))}\left(\int_0^xsf'(y)e^{\alpha f(y)+h(f'(y))}dy\right)dx
\end{eqnarray*}
implies
\begin{eqnarray}
&&\nonumber\left(\int_0^tsf'(x)e^{\alpha f(x)+h(f'(x))}dx\right)^2\\
\nonumber&&\quad=2\int_0^tsf'(x)e^{\alpha f(x)+h(f'(x))}\left(\int_0^xsf'(y)e^{\alpha f(y)+h(f'(y))}dy\right)dx\\
\nonumber &&\quad=\left(\mathbb{E}[\xi_t]\right)^2.
\end{eqnarray}
Thus, by  (\ref{uno}) and (\ref{dos}),
\begin{eqnarray*}
\gamma_1(u)&:=&\int_0^usf'(v)e^{\alpha f(v)+h(f'(v))+\sigma^2 v}dv,\\
\gamma(u)&:=&\int_0^usf'(v)e^{\alpha f(v)+h(f'(v))}dv,
\end{eqnarray*}
It follows that 
\begin{eqnarray*}
&&J[c(\cdot)]\\
&&\enskip=2\lambda\int_0^Tf'(u)se^{\alpha f(u)+ h(f'(u))}\gamma_1(u)du+\int_0^T f'(u)se^{\alpha f(u)+ h(f'(u))}du\\
&&\qquad-\int_0^T2\lambda f'(u)se^{\alpha f(u)+ h(f'(u))}\gamma(u)du- K s\\
&&\enskip=\int_0^T\left\{\left(2\lambda\gamma_1(u)+1-2\lambda\gamma(u)\right)f'(u)se^{\alpha f(u)+ h(f'(u))}-\frac{Ks}{T}\right\}du.
\end{eqnarray*}
as claimed.
\end{proof}
Observe that this last expression has the following functional form in terms of $f$:
\begin{eqnarray}\label{j}
J(f)=\int_0^t\mathcal{L}(\gamma_1(u,f,f'),\gamma(u,f,f'),f(u),f'(u))du.
\end{eqnarray}
Letting
\begin{eqnarray*}
F(f(u),f'(u)):=sf'(u)\exp\left\{\alpha f(u)+h(f'(u))\right\},
\end{eqnarray*}
we may re-express (\ref{j}) as
\begin{eqnarray*}
J(f)=\int_0^T\left(2\lambda\int_0^uF(f(v),f'(v))(e^{\sigma^2 v}-1)dv+1\right)F(f(u),f'(u))du.
\end{eqnarray*}
In particular
\begin{theorem}\label{thm1}
 Suppose that $\lambda=0$. Then, the  open-loop trading schedule $c^*$ is determined by a function $f_1$ which solves the following system
 \begin{eqnarray}\label{ee}
 \frac{\partial F}{\partial f_1}-\frac{d}{dz}\frac{\partial F}{\partial f'_1}=0,
 \end{eqnarray} 
with $f_1(0)=0$ and $f_1(T)=K$.
 \end{theorem}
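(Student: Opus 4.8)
The plan is to observe that when $\lambda=0$ the functional collapses to a classical one‑dimensional problem of the calculus of variations, and then to derive its Euler--Lagrange equation. Indeed, setting $\lambda=0$ in
$$J(f)=\int_0^T\left(2\lambda\int_0^uF(f(v),f'(v))(e^{\sigma^2 v}-1)dv+1\right)F(f(u),f'(u))du - Ks,$$
where $F(p,q)=sq\,e^{\alpha p+h(q)}$, leaves
$$J(f)=\int_0^TF(f(u),f'(u))\,du - Ks,$$
and the additive constant $-Ks$ (equivalently $\int_0^T\frac{Ks}{T}du$) plays no role in the optimization. So the problem is to minimize $\int_0^TF(f,f')\,du$ over $C^1$ (or absolutely continuous) functions $f$ on $[0,T]$ subject to the fixed endpoints $f(0)=0$ and $f(T)=K$, the latter being forced by $f(T)=\int_0^Tc(u)\,du=K$.

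First I would fix a candidate minimizer $f_1$ together with an arbitrary perturbation direction $\eta$ that is smooth with $\eta(0)=\eta(T)=0$, so that the one‑parameter family $f_\varepsilon:=f_1+\varepsilon\eta$ stays admissible for every $\varepsilon$. Differentiating $\varepsilon\mapsto J(f_\varepsilon)$ under the integral and evaluating at $\varepsilon=0$ gives the stationarity condition
$$0=\frac{d}{d\varepsilon}J(f_\varepsilon)\Big|_{\varepsilon=0}=\int_0^T\left(\frac{\partial F}{\partial f_1}\,\eta+\frac{\partial F}{\partial f'_1}\,\eta'\right)du.$$
Next I would integrate the second term by parts, using $\eta(0)=\eta(T)=0$ to annihilate the boundary contribution, which yields $\int_0^T\bigl(\tfrac{\partial F}{\partial f_1}-\tfrac{d}{du}\tfrac{\partial F}{\partial f'_1}\bigr)\eta\,du=0$. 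Since $\eta$ ranges over all functions vanishing at the endpoints, the fundamental lemma of the calculus of variations gives $\tfrac{\partial F}{\partial f_1}-\tfrac{d}{du}\tfrac{\partial F}{\partial f'_1}=0$, which is exactly (\ref{ee}) (with $u$ the running variable written $z$ there), and the boundary data $f_1(0)=0$, $f_1(T)=K$ are inherited from the admissibility class.

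The main technical point to address is regularity: the expression $\tfrac{d}{du}\tfrac{\partial F}{\partial f'_1}$ presupposes $f_1\in C^2$, so strictly one should first obtain the du Bois--Reymond (integrated) form $\tfrac{\partial F}{\partial f'_1}(u)=\int_0^u\tfrac{\partial F}{\partial f_1}\,dv+\text{const}$, and then use the smoothness of $F$ in $p$ and its smoothness in $q$ inherited from $h$, together with a standard bootstrap, to conclude $f_1\in C^2$ and recover the differentiated equation. A secondary issue, which I would treat only briefly, is that (\ref{ee}) is a \emph{necessary} condition; to say it \emph{determines} the optimal schedule one should note that the minimum is attained and that stationary points are genuine minimizers — e.g. via convexity of $q\mapsto F(p,q)$, i.e. mild conditions on $h$ — but since the statement only asks for the trajectory to be determined by a solution $f_1$ of this boundary value problem, the Euler--Lagrange derivation above is the substance of the argument.
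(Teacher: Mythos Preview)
Your proposal is correct and follows exactly the paper's approach: when $\lambda=0$ the functional reduces to $\int_0^T F(f,f')\,du$, and the Euler--Lagrange equation yields (\ref{ee}). The paper's own proof is a one-line citation of the Euler--Lagrange equation (Gelfand--Fomin), whereas you spell out the perturbation argument and even flag the regularity and sufficiency issues that the paper leaves implicit; in substance the arguments are identical.
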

 \begin{proof} If $\lambda=0$, that is, if we only wish to minimize expected execution shortfall, then:
 \begin{eqnarray*}
 J(f)=\int_0^TF(f(u),f'(u))du,
 \end{eqnarray*}
 and thus equation (\ref{ee}) follows from the Euler-Lagrange equation [see for instance Gelfand and Fomin (2000)].
  \end{proof}
  \begin{example}\label{ex1} {\rm Let $T=\alpha=1$ and  both temporary and permanent impact  be linear, i.e. $g(x)=x$, $h(x)=x$, hence:
 \begin{eqnarray*}
F(f(u),f'(u))=f'(u)\exp\left\{f(u)+f'(u)\right\}.
\end{eqnarray*} 
Thus, if one wishes to minimize the {\it expected\/} execution shortfall one should execute according to: 
\begin{eqnarray*}
f'(u)-f''(u)-(1+f'(u))(f'(u)+f''(u))=0,
\end{eqnarray*}
given that $f(0)=0$, and $f(1)=K$. }
\end{example}
\begin{remark} Note that as $\lambda$ increases the client is willing to be more exposed to risk. This idea is equivalent to saying that he/she is willing to increase the speed of execution. Hence we have that for $\lambda>0$, the optimal trajectory will dominate $f_1$:
$$f(s)\geq f_1(s),\qquad 0\leq s\leq T$$
In other words we may decompose $f(s)=f_1(s)+f_2(s)$. This last observation together with our constraint $f(0)=0$, $f(T)=K$, lead to the following two facts:
\begin{eqnarray}
f_2(t)&\geq& 0,\qquad 0\leq t\leq T\\
f_2(0)=f_2(T)&=&0.
\end{eqnarray}
\end{remark}
\begin{remark} The previous remark and equation (\ref{j}) suggest a 2 step procedure to find the optimal trajectory $f$. Namely, first find $f_1$ and given that information solve for $f=f_1+f_2$
\end{remark}
\begin{theorem}\label{thm2} The optimal differentiable trajectory $f$ which solves (\ref{j}) is given by $f_1+f_2$, where $f_1$ is given in Theorem 3.3 and $f_2$ satisfies for $0\leq v\leq T$:
\begin{eqnarray*}
&&f_2(u)\left(2\lambda\int_0^uF^1(f(v),f'(v))dv+1\right)\cdot\left[\frac{\partial F}{\partial f}-\frac{d}{du}\left(\frac{\partial F}{\partial f'}\right)\right]\\
&&\qquad +2\lambda\int_0^uf_2(v)\left[\frac{\partial F^1}{\partial f}-\frac{d}{dv}\left(\frac{\partial F^1}{\partial f'}\right)\right]dv\cdot F(f(u),f'(u))=0,
\end{eqnarray*}
where $f_2(0)=f_2(T)=0$.
\end{theorem}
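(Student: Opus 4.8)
The plan is to read off the stated system as the first-order (Euler--Lagrange) condition for the functional in~(\ref{j}), and then to pass to the decomposition $f=f_1+f_2$. Recall that just above Theorem~\ref{thm1} the cost was put in the form
\[
J(f)=\int_0^T G(u)\,F\big(f(u),f'(u)\big)\,du,\qquad G(u):=1+2\lambda\int_0^u F^1\big(f(v),f'(v)\big)\,dv,
\]
with $F(f,f')=s f'\exp\{\alpha f+h(f')\}$ and $F^1(f,f'):=F(f,f')\,(e^{\sigma^2 v}-1)$, so that $G'(u)=2\lambda F^1\big(f(u),f'(u)\big)$. Since $J$ is minimized over $C^1$ trajectories with $f(0)=0$ and $f(T)=K$, a minimizer $f$ must satisfy $\frac{d}{d\varepsilon}J(f+\varepsilon\eta)\big|_{\varepsilon=0}=0$ for every $\eta\in C^1[0,T]$ with $\eta(0)=\eta(T)=0$; computing this first variation is the first step.

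Write $\delta F(u)=\partial_f F\,\eta(u)+\partial_{f'}F\,\eta'(u)$, define $\delta F^1$ analogously, and put $\delta G(u)=2\lambda\int_0^u\delta F^1(v)\,dv$, so that $\delta J=\int_0^T\big(\delta G\,F+G\,\delta F\big)\,du$. Integrating by parts in the $\eta'$-pieces --- the boundary contributions vanishing because $\eta(0)=\eta(T)=0$ (and $\eta(0)=0$ in the inner integral) --- one obtains $\int_0^T G\,\delta F\,du=\int_0^T\big[\,G\,\mathcal{E}_F-G'\,\partial_{f'}F\,\big]\eta\,du$, where $\mathcal{E}_F:=\partial_f F-\frac{d}{du}\partial_{f'}F$, together with $\delta G(u)=2\lambda\,\partial_{f'}F^1(u)\,\eta(u)+2\lambda\int_0^u\mathcal{E}_{F^1}(v)\,\eta(v)\,dv$, where $\mathcal{E}_{F^1}:=\partial_f F^1-\frac{d}{dv}\partial_{f'}F^1$.

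The crux is a cancellation that surfaces when $\delta G$ is multiplied by $F$. Since $F^1=F\cdot(e^{\sigma^2 v}-1)$ we have $\partial_{f'}F^1=(e^{\sigma^2 v}-1)\,\partial_{f'}F$, while $G'(u)=2\lambda F^1(u)=2\lambda(e^{\sigma^2 u}-1)F(u)$; therefore $2\lambda\,\partial_{f'}F^1(u)\,F(u)=G'(u)\,\partial_{f'}F(u)$, which exactly annihilates the $-G'\partial_{f'}F$ term. The variation then collapses to
\[
\delta J=\int_0^T\Big\{\,G(u)\,\mathcal{E}_F(u)\,\eta(u)+2\lambda\,F(u)\!\int_0^u\!\mathcal{E}_{F^1}(v)\,\eta(v)\,dv\,\Big\}du .
\]
When $\lambda=0$ this is $\int_0^T\mathcal{E}_F\,\eta\,du$, whose vanishing for all $\eta$ is exactly $\mathcal{E}_F=0$, the equation of Theorem~\ref{thm1}, so the $\lambda=0$ minimizer is $f_1$. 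When $\lambda>0$ I would set $f=f_1+f_2$; since $f$ and $f_1$ obey the same end conditions, $f_2(0)=f_2(T)=0$, and requiring the integrand of $\delta J$ to vanish along the actual displacement $\eta=f_2$ --- after expanding $G(u)=1+2\lambda\int_0^u F^1\,dv$ --- is precisely the system in the statement.

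I expect the main obstacle to be the non-local term $\int_0^u F^1\,dv$ inside $G$: the first variation must be bookkept so that both integrations by parts are carried out correctly and the cancellation above is actually exhibited, and that cancellation genuinely uses the multiplicative structure $F^1=(e^{\sigma^2 v}-1)F$, not just the explicit form of $F$. A secondary subtlety is the passage from ``$\delta J=0$ for every admissible $\eta$'' to the pointwise identity for $f_2$ quoted in the theorem: carrying the non-local term through a Fubini interchange would instead replace the factor $F(u)$ by $\int_u^T F(w)\,dw$, so the displayed equation should be understood as the Euler--Lagrange condition written in the pointwise form that the two-step scheme exploits, consistent with the monotonicity facts ($f_2\ge 0$ and $f_2(0)=f_2(T)=0$) recorded in the preceding remarks.
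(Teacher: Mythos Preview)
Your proposal is essentially the same argument as the paper's. Both compute the first variation of
\[
J(f)=\int_0^T\Big(1+2\lambda\int_0^u F^1\,dv\Big)F\,du,
\]
integrate by parts in the $\eta'$-pieces, and then exploit the identity $2\lambda\,\partial_{f'}F^1(u)\,F(u)=G'(u)\,\partial_{f'}F(u)$ to kill the mixed boundary term; the paper records this same cancellation by computing $\tfrac{d}{du}\gamma_\epsilon(u)=2\lambda F(e^{\sigma^2u}-1)$ and matching it against the boundary contribution $2\lambda f_2(u)\partial_{f'}F(e^{\sigma^2u}-1)F$. The only cosmetic difference is that the paper parametrizes the family as $g_\epsilon=f_1+\epsilon f_2$ and sets $J'(1)=0$, while you compute the variation at $f$ in a generic direction $\eta$ and then specialize to $\eta=f_2$; and you absorb the extra $-\sigma^2 e^{\sigma^2 v}\partial_{f'}F$ piece into $\mathcal{E}_{F^1}$, whereas the paper carries it as a separate term before regrouping.

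Your closing remark about the passage from $\delta J=0$ to the displayed pointwise identity is well taken: the paper invokes the fundamental lemma at exactly the same spot and with the same informality, extracting the $u$-integrand of $J'(1)$ and declaring it zero even though the second summand is non-local in $\eta=f_2$. So you are not missing anything the paper supplies; you have simply been explicit about a step the paper leaves tacit.
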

\begin{proof} The idea is to follow the derivation of the Euler-Lagrange equation [see for instance, Gelfand and Fomin (2000)], but in this case, the unknown function $f_2$ will play the role of the perturbation. Thus, it is essential the fact that $f_2(0)=f_2(T)=0$.

Let
\begin{eqnarray*}
f(v)&=&f_1(v)+f_2(v)\\
g_\epsilon(v)&=&f_1(v)+\epsilon f_2(v)
\end{eqnarray*} 
where $f_2(0)=0=f_2(T)$
and
\begin{eqnarray*}
J(\epsilon)=\int_0^T\left(2\lambda\int_0^uF(g_\epsilon(v),g_\epsilon'(v))(e^{\sigma^2v}-1)dv+1\right)F(g_\epsilon(u),g_\epsilon'(u))du
\end{eqnarray*}
then
\begin{eqnarray*}
\frac{dJ}{d\epsilon}(\epsilon)&=&\int_0^T\left(2\lambda\int_0^u\frac{dF}{d\epsilon}(g_\epsilon(v),g_\epsilon'(v))(e^{\sigma^2v}-1)dv\right)\\
&&\enskip\times F(g_\epsilon(u),g_\epsilon'(u))du\\
&&+\int_0^T\left(2\lambda\int_0^uF(g_\epsilon(v),g_\epsilon'(v))(e^{\sigma^2v}-1)dv+1\right)\\
&&\enskip\times\frac{dF}{d\epsilon}(g_\epsilon(u),g_\epsilon'(u))du\\
&=&\int_0^T\left(2\lambda\int_0^u\left(f_2(v)\frac{\partial F}{\partial g_\epsilon}+f_2'(v)\frac{\partial F}{\partial g_\epsilon'}\right)(e^{\sigma^2v}-1)dv\right)\\
&&\enskip\times F(g_\epsilon(u),g_\epsilon'(u))du\\
&&+\int_0^T\left(2\lambda\int_0^uF(g_\epsilon(v),g_\epsilon'(v))(e^{\sigma^2v}-1)dv+1\right)\\
&&\enskip\times\left(f_2(u)\frac{\partial F}{\partial g_\epsilon}+f_2'(u)\frac{\partial F}{\partial g_\epsilon'}\right)du
\end{eqnarray*}
if we set
\begin{eqnarray*}
\gamma_\epsilon(u)=2\lambda\int_0^uF(g_\epsilon(v),g_\epsilon'(v))(e^{\sigma^2v}-1)dv+1
\end{eqnarray*}
and by the integration by parts formula we have that
\begin{eqnarray*}
\frac{dJ}{d\epsilon}(\epsilon)&=&\int_0^T\Bigg{(}2\lambda\int_0^uf_2(v)(e^{\sigma^2v}-1)\left[\frac{\partial F}{\partial g_\epsilon}-\frac{d}{dv}\left\{\frac{\partial F}{\partial g_\epsilon'}\right\}\right]dv\\
&&\quad-2\lambda\int_0^uf_2(v)\frac{\partial F}{\partial g_\epsilon'}\sigma^2e^{\sigma^2v}dv\\
&&\quad+2\lambda f_2(u)\frac{\partial F}{\partial g_\epsilon'}(e^{\sigma^2u}-1)\Bigg{)}F(g_\epsilon(u),g_\epsilon'(u))du\\
&&+\int_0^T\gamma_\epsilon(u)f_2(u)\left[\frac{\partial F}{\partial g_\epsilon}-\frac{d}{du}\left(\frac{\partial F}{\partial g_\epsilon'}\right)\right]du\\
&&-\int_0^Tf_2(u)\frac{\partial F}{\partial g_\epsilon'}\frac{d}{du}\left(\gamma_\epsilon(u)\right)du
\end{eqnarray*}
Let us compute
\begin{eqnarray*}
\frac{d}{du}(\gamma_\epsilon(u))=2\lambda F(g_\epsilon(u),g_\epsilon'(u))(e^{\sigma^2u}-1)
\end{eqnarray*}
which yields
\begin{eqnarray*}
J'(\epsilon)&=&\int_0^T\Bigg{(}2\lambda\int_0^uf_2(v)(e^{\sigma^2v}-1)\left[\frac{\partial F}{\partial g_\epsilon}-\frac{d}{dv}\left\{\frac{\partial F}{\partial g_\epsilon'}\right\}\right]dv\\
&&\qquad-2\lambda\int_0^uf_2(v)\frac{\partial F}{\partial g_\epsilon'}\sigma^2e^{\sigma^2v}dv\Bigg{)} F(g_\epsilon(u),g_\epsilon'(u))du\\
&&+\int_0^T\gamma_\epsilon(u)f_2(u)\left[\frac{\partial F}{\partial g_\epsilon}-\frac{d}{du}\left(\frac{\partial F}{\partial g_\epsilon'}\right)\right]du.
\end{eqnarray*}
But observe that when $J'(1)$ we have
\begin{eqnarray*}
J'(1)&=&\int_0^T\Bigg{(}2\lambda\int_0^uf_2(v)(e^{\sigma^2v}-1)\left[\frac{\partial F}{\partial f}-\frac{d}{dv}\left\{\frac{\partial F}{\partial f'}\right\}\right]dv\\
&&\qquad-2\lambda\int_0^uf_2(v)\frac{\partial F}{\partial f'}\sigma^2e^{\sigma^2v}dv\Bigg{)} F(f(u),f'(u))du\\
&&+\int_0^T\gamma_1(u)f_2(u)\left[\frac{\partial F}{\partial f}-\frac{d}{du}\left(\frac{\partial F}{\partial f'}\right)\right]du\\
&=&0.
\end{eqnarray*}
Now, given that $f_2(0)=f_2(T)=0$ and from the fundamental lemma of Calculus of variations we have that:
\begin{eqnarray*}
&&f_2(u)\gamma_1(u)\left[\frac{\partial F}{\partial f}-\frac{d}{du}\left(\frac{\partial F}{\partial f'}\right)\right]\\
&&\qquad +2\lambda\int_0^uf_2(v)\left[\frac{\partial F^1}{\partial f}-\frac{d}{dv}\left(\frac{\partial F^1}{\partial f'}\right)\right]dv\cdot F(f(u),f'(u))=0
\end{eqnarray*}
with the constraint that $f_2(u)>0$ for $u\in(0,T)$ and $f_2(T)=f_2(0)=0$.
or
\begin{eqnarray*}
&&f_2(u)\left(2\lambda\int_0^uF^1(f(v),f'(v))dv+1\right)\cdot\left[\frac{\partial F}{\partial f}-\frac{d}{du}\left(\frac{\partial F}{\partial f'}\right)\right]\\
&&\qquad +2\lambda\int_0^uf_2(v)\left[\frac{\partial F^1}{\partial f}-\frac{d}{dv}\left(\frac{\partial F^1}{\partial f'}\right)\right]dv\cdot F(f(u),f'(u))=0
\end{eqnarray*}
as claimed.
\end{proof}
\begin{example}\label{ex2} ({\it cont\/}.) {\rm With linear and temporary impact as before, that is $g(x)=x$ and $h(x)=x$. You may find the optimal trading trajectory $f$ for arbitrary $\lambda\geq 0$ by first determining $f_1$, using Theorem \ref{thm1},  next you determine $f_2$ by use of Theorem \ref{thm2}. That is, letting:
\begin{eqnarray*}
F(f(v),f'(v))&=&f'(v)\exp\left\{f(v)+f'(v)\right\}\\
F^1(f(v),f'(v))&=&F(f(v),f'(v))(e^{\sigma^2v}-1)
\end{eqnarray*}
Theorem \ref{thm2} states that $f_2$ satisfies the following identity
\begin{eqnarray*}
&&-f_2(u)\left\{2f''(u)+f'(u)\left[f'(u)+f''(u)\right]\right\}\\
&&\qquad\times\left(2\lambda\int_0^uF^1(f(v),f'(v))dv+1\right)\\
&&\qquad+ 2\lambda f'(u)\int_0^uf_2(u)\left[\frac{\partial F^1}{\partial f}-\frac{d}{dv}\left(\frac{\partial F^1}{\partial f'}\right)\right]dv=0
\end{eqnarray*}
where
\begin{eqnarray*}
&&\frac{\partial F^1}{\partial f}-\frac{d}{du}\left(\frac{\partial F^1}{\partial f'}\right)\\
&&\quad=e^{f(u)+f'(u)}\\
&&\qquad\times-\left[\left\{2f''(u)+f'(u)(f'(u)+f''(u))\right\}(e^u-1)+(1+f'(u))e^u\right].
\end{eqnarray*} }
\end{example}

\section{Examples}
\subsection{Example} The first example we want to study is the case in which both the temporary and the permanent impact are linear as in Examples \ref{ex1} and \ref{ex2}. The motivation is to compare our results with those obtained by Almgren and Chriss (2000). In this example we have set $T=1$ and $K=3$. The solutions have been numerically calculated using Theorems \ref{thm1} and \ref{thm2} and then plotted in Figures 1 and 2. 

It may be observed that, as one would expect, the solutions are not the same and in fact---under the present conditions---our strategy dominates Almgren and Chriss. 

\subsection{Example} The next example we want to study is the case in which the permanent impact is some power less than 1. Namely $h(x)=x^{3/5}$, and the permanent is linear. First from Theorem \ref{thm1} we have that
\begin{eqnarray*}
F(f_1(u),f'_1(u)):=sf'_1(u)\exp\left\{f_1(u)+(f'_1(u))^{3/5}\right\}
\end{eqnarray*}
and $f_1$ is the solution to:
\begin{eqnarray*}
f'_1-\left(f'_1+\frac{3}{3}\frac{f''_1}{(f'_1)^{2/5}}\right)-\left(\frac{3}{5}\right)^2\frac{f''_1}{(f'_1)^{2/5}}-\frac{3}{5}(f'_1)^{3/5}\left(f'_1+\frac{3}{5}\frac{f''_1}{(f'_1)^{2/5}}\right)=0,
\end{eqnarray*}
$f_1(0)=0$ and $f_1(T)=K$. In particular, with $T=1$ and $K=3$ as in the previous example we have plotted our result in Figure 3. Note that the sublinear impact has increased the speed of execution. Next, we computed the Markowitz-optimal open-loop trajectory, with $\lambda=1$ by first computing $f_2$ as described in Theorem \ref{thm2}.
\begin{remark} The previous exercise suggests that if one chooses the temporary impact to be sub-linear, the solution---with all the other parameters fixed---will always dominate its linear counterpart. On the other hand, if the temporary impact is super-linear, the solution will be dominated by its linear counterpart.  

A natural question is: What is the correct form of $h$ given our model?
\end{remark}
\section{Remarks on Markovian controls}\label{cinco}

As pointed out in the introduction, we have only dealt with {\it differentiable\/} deterministic controls---also known as open loop controls. Furthermore, our criteria of optimal is in the {\it Markowitz\/} sense. A couple of natural and reasonable question arise: how can we study  {\it feedback controls\/}? How can we optimize with respect to general utility functions? Namely, given some utility function $U$ we want to find a trajectory $c$ such that
\begin{eqnarray*}
\sup_{c\in\mathcal{U}}\mathbb{E}_{t,x}[U(Y)]
\end{eqnarray*}
where $\mathcal{U}$ is the set of admissible controls and $Y=\xi-KT$ is the execution shortfall. But $\xi$ is in general a very difficult creature to characterize, unless you {\bf observe} that you may construct a diffusion with the following dynamics:
\begin{eqnarray*}
dX_t=\left(c_t+\left[g(c_t)+\frac{dh}{dt}(c_t)\right]X_t\right)dt+\sigma X_tdB_t,\qquad X_0=0
\end{eqnarray*} 
that is equal in distribution to $\xi$, i.e.
\begin{eqnarray*}
\xi_t\stackrel{\mathcal{D}}{=}X_t,\qquad \forall t.
\end{eqnarray*}
Thus
\begin{eqnarray*}
\sup_{c\in\mathcal{U}}\mathbb{E}_{t,x}[U(Y)]=\sup_{c\in\mathcal{U}}\mathbb{E}_{t,x}[U(X_T-KT)]
\end{eqnarray*}
and now you may proceed to derive the Hamilton-Jacobi-Bellman equation. It is precisely this question, which the authors are investigating presently.

\begin{figure}\label{fig1}
\hspace{-1.3 cm}\includegraphics[scale=.5,height=12cm,width=12cm]{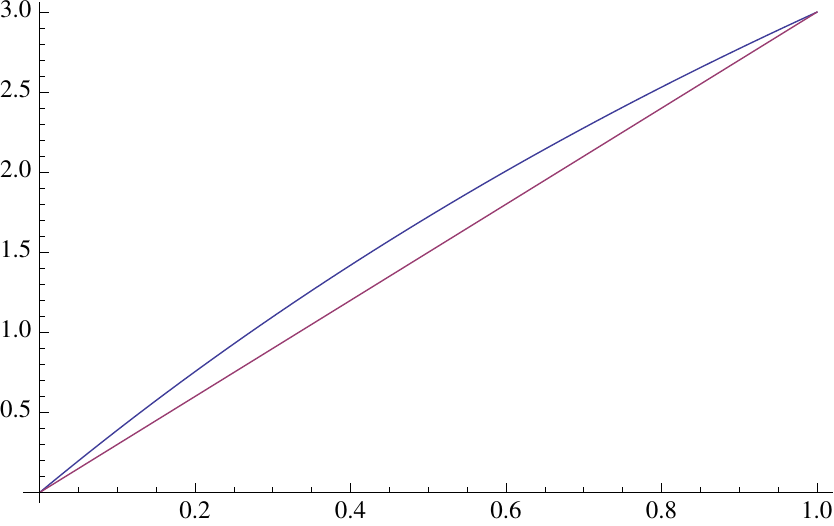}
\caption{The graph is plotted with Mathematica. For $\lambda=0$ the blue line (upper line) represents our optimal trading trajectory, the red line is Almgren and Chriss.}
\end{figure}

\begin{figure}\label{fig2}
\hspace{-1.3 cm}\includegraphics[scale=.5,height=12cm,width=12cm]{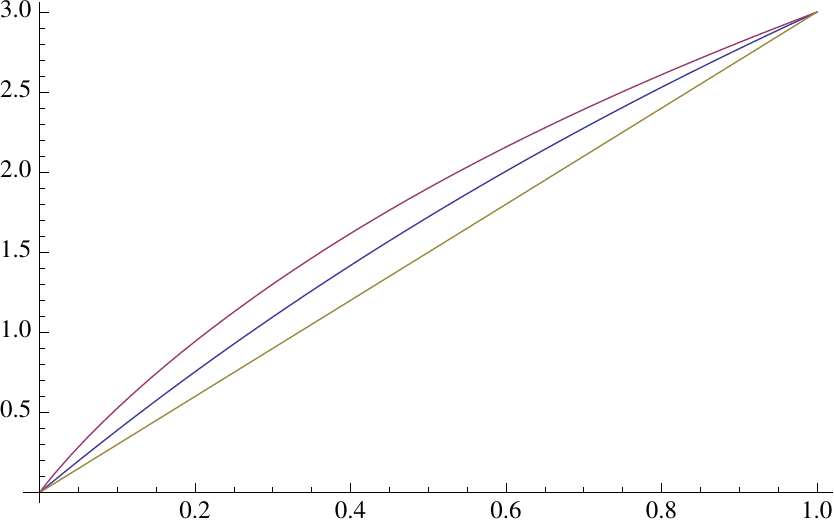}
\caption{The graph is plotted with Mathematica. The upper red line is with $\lambda=1$, the middle blue line is with $\lambda=0$ and the lower yellow line is the case of arithmetic Brownian motion.}
\end{figure}

\begin{figure}
\hspace{-1.3 cm}\includegraphics[scale=.5,height=12cm,width=12cm]{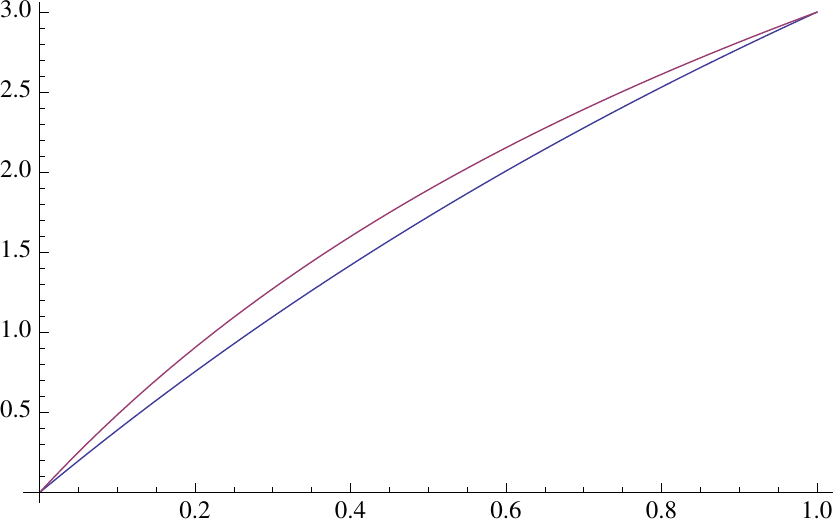}
\caption{The graph is plotted with Mathematica. The upper red line represents the trading trajectory with sub-linear temporary impact. The lower blue line represents the trajectory with linear temporary impact. }
\end{figure}

\begin{figure}
\hspace{-1.3 cm}\includegraphics[scale=.5,height=12cm,width=12cm]{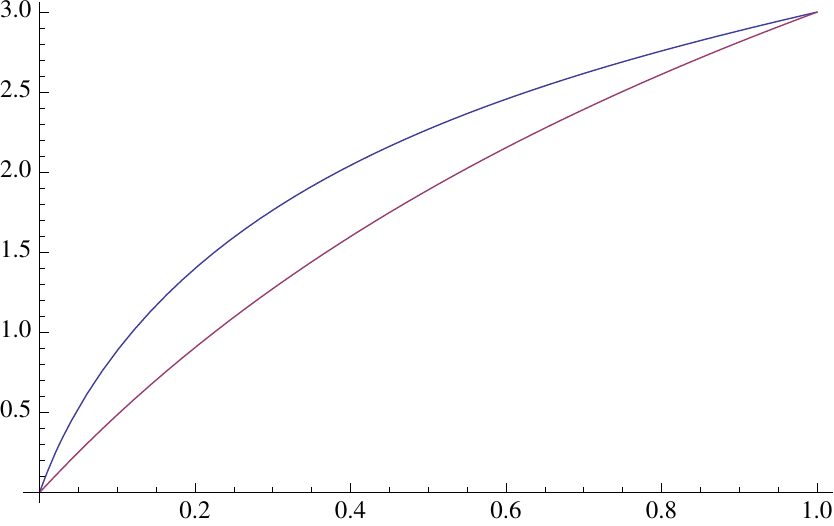}
\caption{The graph is plotted with Mathematica. The upper blue line represents the trading trajectory with sub-linear temporary impact and $\lambda=1$. The lower red line represents the trajectory with sub-linear temporary impact and $\lambda=0$. }
\end{figure}

\end{document}